\newtheorem{theorem}{Theorem}
\newtheorem{lemma}{Lemma}
\newtheorem{corollary}{Corollary}
\title{Secure Convolutional Neural Networks using FHE}
\author{Thomas Shortell \and Ali Shokoufandeh}
\date{Department of Computer Science\\
  Drexel University\\
  Philadelphia, PA 19104\\
  \texttt{tms38@drexel.edu},\texttt{ashokouf@cs.drexel.edu}\\
}
\begin{document}

\maketitle

\begin{abstract}

  In this paper, a secure Convolutional Neural Network classifier is
  proposed using Fully Homomorphic Encryption (FHE).  The secure
  classifier provides a user with the ability to out-source the
  computations to a powerful cloud server and/or setup a server to
  classify inputs without providing the model or revealing source
  data.  To this end, a real number framework is developed over FHE by
  using a fixed point format with binary digits.  This allows for real
  number computations for basic operators like addition, subtraction,
  and multiplication but also to include secure comparisons and max
  functions.  Additionally, a rectified linear unit is designed and
  realized in the framework.  Experimentally, the model was verified
  using a Convolutional Neural Network trained for handwritten digits.
  This encrypted implementation shows accurate results for all
  classification when compared against an unencrypted implementation.
\end{abstract}

\section{Introduction}

Privacy and security are at the forefront and keystone problems of
cloud computing in today's connected world.  Cloud computing allows
users to easily outsource their computations but at the cost of losing
privacy and security.  Maintain the privacy and security of data,
particularly personal information, can be prohibitive on the use of a
cloud computing resource.  Protecting the data and algorithms for
usage on cloud computers opens the aperture to many more applications.
The protection must provide the method to limit an attacker's ability
to use the data.  On a parallel tract, machine learning is an
important technology in use today.  Uses of machine learning
algorithms continue to increase, expand and then become vital
components of today's world.  As a concrete problem to focus on,
consider a situation of needing to outsource the classification of a
Convolutional Neural Network \cite{lecun1995convolutional} and/or
providing a classification from a Convolutional Neural Network to a
user without providing the model nor revealing the original source
data.  In this paper we propose a model for performing  a secure
realization of a Convolutional Neural Network using FHE.




In the FHE framework, a user processes their private data on an
external computationally powerful resource with a level of privacy.
In an abstract view of the process, the user encrypts their data,
securely transfers it to the computing resource, runs the process,
securely transfer back the result, and then decrypts it.
Figure~\ref{fig:cnnfhe} depicts this concept in the case of using a
Convolutional Neural Network.  At the start, the user encrypts their
image and securely transfers the image to the server (gray arrow).
Without loss of generality, image is used to identify the 2D object
that is input to the Convolutional Neural Network.  With the encrypted
image on the server, the convolution layers of the network are run;
followed by the fully connected layer. Despite the simplicity of the
CNN architecture considered in this paper, the proposed FHE model can
be extended to more complex models.  It needs to be noted at this
point that any structure of a CNN can be run.  After the processing is
complete, the user securely transfers the data back for decryption.
Decrypting the result provides the user with the classification
results from the Convolution Neural Network.  While conceptually this
process is very simple, extending the FHE to support the CNN over
reals is a nontrivial task.

\begin{figure}
  \centering
  \includegraphics[scale=0.33]{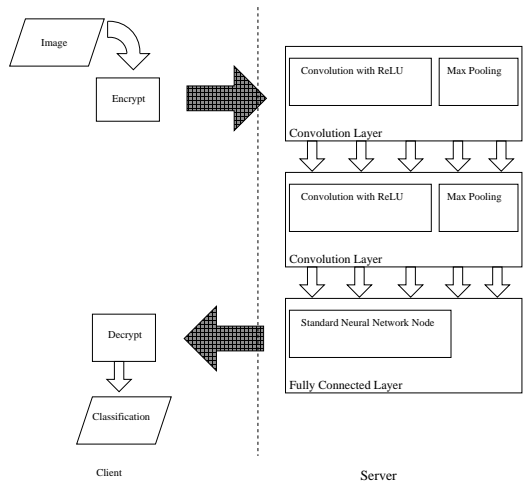}
  \caption{Convolutional Neural Network in a Fully Homomorphic Encryption framework}
  \label{fig:cnnfhe}
\end{figure}

Understanding what a Fully Homomorphic Encryption scheme provides is
critical to understanding the hardness of the problem.  The first
scheme was developed by \citet{gentry2009fully} and which provided the
basic functionality for encrypted data processing using addition and
multiplication.  This research uses a sequel scheme proposed by Gentry
in 2013~\citet{gentry2013homomorphic} which supports two possible
numerical spaces: binary numbers ($\{0,1\}$) and integers
($\mathbb{Z}_q$ for a modulus $q$ which is an FHE parameter).  To
support any processing over real numbers, new models are needed that
represent real numbers in the FHE framework.  As the proposed model
provides an approximation model for original reals, there will be
errors introduced by the representation.  The primary contributions of
this research are the ability to perform a Convolutional Neural
Network while the data is encrypted using an approximation framework
for real numbers, as well as a theoretical bound on the error
introduced by the framework. This in turn enables the user with a
trade off on time/space complexities for accuracy or vice versa.

The basis of the FHE scheme's security hardness is the Learning With
Errors problem by \citet{regev2009lattices}.  The problem hardness is
equivalent to the shortest vector problem in lattices and is
considered quantum hard.  This hardness result is what allows FHE to
process data while encrypted.  In this model, the ciphertexts are
generally represented as matrices and plaintexts are embedded along
the diagonal.  A ciphertext can only be decrypted using the secret key
which is known as the trap door of the hardness problem.  Within this
approach, \citet{gentry2013homomorphic} supports two plaintext spaces:
binary numbers and integers modulo a set value $q$.  The scheme
provides a single operator for binary numbers: \textsc{nand}.  As
\citet{wesselkamper1975sole} showed, \textsc{nand} gates can be used
to create all other binary gates.  In a general set up, the gate can
be computed via matrix-matrix arithmetic $I-C_1*C_2$, which requires
matrix multiplication. This latter computation is the main source of
time and space complexity for the scheme. We will employ the
\textsc{nand} gate to compute fixed point binary numbers in FHE.  Lastly, FHE
schemes have an additional capability to refresh ciphertexts.  This is
will be a critical in our approach because the Learning with Errors
problem introduces noise into the ciphertext, which is sometimes
characterized as error.
We will refer to the Learning with Errors error as noise and the numerical introduced by the framework as error to avoid confusion. 

The Convolutional Neural Networks considered here is based by
LeCun~\citet{lecun1995convolutional} that reduces the number of
trainable parameters in an equivalent fully connected neural network.
This in turn reduces the number of trainable parameters by creating
layers that use a convolution of an image over a smaller subset of the
pixels.  The individual convolutions are known as feature maps and are
equivalent to a dot product of the following form:
\begin{equation}
  \sum_{i=0}^{N} c_i \cdot x_i,
\end{equation}
where $c_i$ is the trained constant, $x_i$ is the input, $N$ is the
entire size of the convolution, and starting at $0$ allows for a bias
value to be introduced.  The computation also assumes that the two
dimensional section is aligned into a single array.  As the
convolutions are done as a standard neural network node, the
additional change is that the activation function is a rectified
linear unit (ReLU):
\begin{equation}
  \label{eq:relu}
  relu(x) = \left\{ 
    \begin{array}{ll}
      x & \mbox{if } x > 0 \\
      0 & \mbox{otherwise} 
    \end{array}
    \right.
\end{equation}
The final step of a convolution layer is subsampling via  max pooling of a set of
outputs.  Max pooling takes in a segment of the image and selects the
max value of the segment.
\begin{equation}
  \label{eq:max}
  \mbox{max}\left( x_1, x_2, \dots, x_n \right)
\end{equation}
A set of convolution layers is used to compute feature maps and reduce
the size to a manageable value of trainable parameters for the fully
connected layer(s).



The remainder of paper is organized as follows.
Section~\ref{sec:related} provides an overview of related work and
current research in the realm of FHE, secure processing, and machine
learning.  Section~\ref{sec:cnninfhe} describes the framework for
realizing Convolutional Neural Networks in FHE.
Section~\ref{sec:experimental} discusses the experimental results of
applying the CNN under FHE framework.  Section~\ref{sec:conclusion}
concludes this paper along with future work.

\section{Related Works}
\label{sec:related}

This section contains details on related works in FHE, secure
processing, and machine learning.  As mentioned in the Introduction,
the first FHE scheme was developed by \citet{gentry2009fully}.  This
original scheme was limited to binary numbers and was inefficient in
comparison to later schemes.  Additional significant work continued in
the early 2010s time frame to improve the FHE schemes from both
functional capabilities and time/space efficiencies (see
\citet{brakerski2012fully,brakerski2012leveled,brakerski2011efficient}).
These improvements enabled processing to include the integer ring vice
binary.  \citet{gentry2013homomorphic} was chosen as the FHE scheme
for this research because it further improved on the earlier schemes
with a simpler conceptual and implementation framework.  Additional
work in FHE continues, particularly with multi-key implementations of
multiple users (see \citet{peikert2016multi,brakerski2016lattice}).
While outside the scope of this paper, multi-key research has
potential uses when considering training of Convolutional Neural
Networks.

FHE is not the only technique for secure processing.
\citet{paillier1999public} implemented an encryption scheme that
allows for additive homomorphic operations.  The Paillier scheme is
not an FHE scheme because Paillier's scheme does not allow for
multiplicative homomorphic operations of two ciphertexts.  Multiple
uses of Paillier have occurred over the years; including for
algorithms such as SIFT \citep{hsu2011homomorphic} and SURF
\citep{bai2014surf}.  While the previous two algorithms are signal
processing related, machine learning algorithms have been researched
as well.  \citet{yu2006privacy} implemented a privacy-preserving SVM
using secure protocol techniques.  \citet{laur2006cryptographically}
also implemented a cryptographically private SVM using secure protocol
techniques around the same time as the previous research.  Another
example of encrypted machine learning classification is
\citet{bost2015machine}, where the authors have implemented a set of
algorithms including SVM, Naive Bayes, Decision Trees, among others;
Convolutional Neural Networks was not included among the others.
\citet{yang2005privacy} is the last example of classification for
Naive Bayes with secure protocol techniques.  On the other side of
machine learning for training the algorithm(s),
\citet{yuan2014privacy} implemented a neural network training using
homomorphic encryption via the BGN scheme.

\citet{chabanne2017privacy} also considered the Convolutional Neural
Networks classification and FHE.  In addition to using the most recent
FHE scheme as the underlying security framework, the implementations
for the rectified linear unit and the max pooling nodes are
implemented using approximations (polynomials and averaging
respectively) whereas this research computes the actual functions
as defined in the unencrypted domain. 
\citet{gilad2016cryptonets} considered CNNs as well using a different FHE scheme known as YASHE \citet{bos2013improved}.
This implementation is similar to \citet{chabanne2017privacy} and has the same similarities and difference with this research.
Two other implementations exist using FHE as part of the techniques but additionally used different neural network implementations with binary \citet{sanyal2017} and discretized neural networks \citet{cryptoeprint:2017:1114}.
Next, two other implementations exist in terms of using oblivious techniques like garbled circuits.
First, \citet{cryptoeprint:2017:452} created an oblivious neural network (called MiniONN) using garbled circuits combined with homomorphic encryption.
Second, \citet{juvekar2017} improved on the previous concept for a lower latency framework using a linear algebra kernel in FHE to perform faster computations.
Finally, \citet{unknown2018} has developed a concept similar to this research; however, our research shows better performance in accuracy and run time. 

Similarly, there is existing FHE
research for using binary numbers to implement reals.  numbers in FHE.
\citet{shortell2017secure} and \citet{shortell2018secure} used binary numbers in fixed point format
to perform an secure FFT.  We improve upon their approach to compute
real functions to perform the Convolutional Neural Network in FHE.

\section{CNN in FHE}
\label{sec:cnninfhe}

This section we present the details of the Convolution Neural Network
in the FHE framework including the numerical error introduced by the
implementation.  Real numbers are stored in the framework using a
fixed point binary format.  As mentioned earlier, a Convolutional
Neural Network is built from two different types of layers:
convolution and fully connected layers.  Figure~\ref{fig:specificcnn}
depicts the Convolutional Neural Network structure that is used in
this research, which is loosely based on the original network used in
\citet{lecun1995convolutional}. Overall, there are 3 layers, starting
with 2 convolutional layers followed by a fully connected layer. 

For the sake of illustration, we assume the input to the network is
the handwritten decimal digits in the form of a $28\times 28$ image,
with 10 output nodes. The first convolution layer reduces the size to
$24\times 24$ using a $5\times 5$ convolution across the image.  Max
pooling operates on $2\times 2$ segments without any overlap causing
the feature map to reduce to $12\times 12$ in size.  In the first
layer, there are 4 different convolutions that are run; thus the 4
feature maps outputs of the layer.  In the second layer, the number of
convolutions is 15 with interactions being allowed between the feature
maps of the first layer.  The second layer uses the same control
parameters as the first layer: $5\times 5$ convolution and $2\times 2$
max pooling segments.  This leads to output feature map size of
$4\times 4$ from the layer.  With the completion of the two
convolution layers, the fully connected layer takes the $240$ ($15
\times 4 \times 4$) outputs of the convolution layer and inputs them
directly into 10 output nodes of the fully connected layer.  In terms
of FHE, the major components that need to be developed are the ability
to do a dot product, the ReLU, and max function.

\begin{figure}
  \centering
  \includegraphics[scale=0.33]{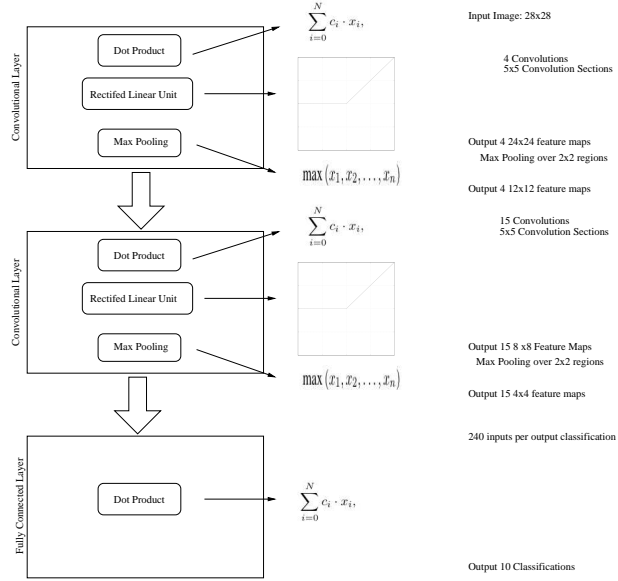}
  \caption{Specific Convolutional Neural Network Used}
  \label{fig:specificcnn}
\end{figure}

\paragraph{Fixed Point Binary Format}
We will use the fixed point binary representation to support real-valued
computations for FHE.  The format is defined by multiplying a real
number, $r$, by a integer scaling factor, $\delta$, and then flooring
the result.  This provides an integer, $z$:
\begin{equation}
  \label{eq:scale}
  \lfloor r \cdot \delta \rfloor = z.
\end{equation}
This number can be converted to binary via:
\begin{equation}
  z = \sum_{i=0}^n b_i \cdot 2^i.
\end{equation}
Given the number in the binary format, fixed point operations in the
integer space are implemented.  Basic operations including addition,
subtraction, multiplication, and division can be implemented using
binary gates starting from the \textsc{nand} gate to other binary
gates to half and full adders. While addition and subtraction are
straight forward to implement with full adders, computing generic
multiplication requires the use of Wallace Trees
\citet{wallace1964suggestion}. We note that division can be computed
as well but this operation is not necessary for the implementation of
Convolutional Neural Network.  As two's compliment is being used in
the binary format, secure comparisons are done by examining the bits.
By subtracting two numbers, the result (in encrypted form as a single
bit) can identify equality, less than, greater than, or lack thereof.
The single bit has many uses including obfuscation.  An example
consider
\begin{equation}
  sum = sum + b \cdot value.
\end{equation}
that shows conditional operations can be done by using the resultant
bit ($b$ in this case) to add $value$ to the $sum$.  We will use this
simple but powerful technique to implement the rectified linear unit and the max
pooling capabilities.

\paragraph{Encrypted Dot Product}
Computing a dot product within the framework is crucial to computing
the Convolutional Neural Network.  A dot product can be defined as:
\begin{equation}
  \sum_{i=0}^m c_i \cdot x_i.
\end{equation}
As the fixed point binary format provides addition and
multiplication, computing the dot product is straight forward.  An
important final note for Convolutional Neural Networks is that
constant multiplication can be used in the case of client
classification which the client does not need to know the parameters,
i.e. the server is not required to encrypt the parameters.

\paragraph{Activation Functions - ReLU}
Rectified Linear Unit (ReLU) is required to properly implement the
Convolutional Neural Network (see Eq.~\ref{eq:relu}).  First we need
to determining if the input is greater than or equal to
zero.  If this result is called $b$, then the ReLU is computed as:
\begin{equation}
  res = b \cdot x .
\end{equation}
This technique accurately computes ReLU. 

\paragraph{Max Pooling}
Implementing max pooling in FHE is an operation where the max value of
the set of inputs is determined; i.e. the max function
(Eq.~\ref{eq:max}).  Because a secure comparison can be used to
identify the greater than of two ciphertexts, the max function is
implementable by running a loop over the inputs and finding the max
value.  Using obfuscation, the update equation can be stated as:
\begin{equation}
  max = b \cdot currentmax + !b \cdot next
\end{equation}
where $b$ and $!b$ identify which of the two values is the max
(i.e. between the $currentmax$ and $next$).

Next, we estimate the magnitude of errors introduced during these
computations. The main source of the error is due to the use of floor
function in Eq.~\ref{eq:scale}, which increase over iterative
computations.  We first state the theorem describing the error bound:

\begin{theorem}
  \label{thm:cnnerror}
  Given the FHE implementation of a Convolutional Neural Network with a set of convolutions layers and fully connected layers $L$, the implementation will result in an error of:
  \begin{equation}
    \Delta \cdot \prod_{i \in L} r_i \cdot d_i 
  \end{equation}
  where $\Delta$ is the initial error, $r_i$ and $d_i$ are defined per
  each layer as follows: for a convolution layer, $r_i$ is square root
  of convolution size and $d_i$ is the maximum of the sum of constants
  for the set of convolutions for the layer.  for a fully connected
  layer, $r_i$ is the square root of the node input size and $d_i$ is
  the maximum of the sum of constants among the outputs of the layer.
\end{theorem}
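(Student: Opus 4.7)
The plan is to isolate the sole source of error---the floor in Eq.~\ref{eq:scale}---and then propagate it forward layer by layer. First I would establish the base case: because every encrypted input is obtained as $\lfloor r\cdot\delta\rfloor$, its decoded value differs from $r$ by at most $\Delta=1/\delta$, which is the $\Delta$ appearing in the statement. Crucially, all downstream binary arithmetic (full adders, Wallace trees, and the \textsc{nand} compositions used to realize ReLU and max) acts \emph{exactly} on the integer representation, so no further rounding is ever introduced; any amplification of $\Delta$ must therefore come entirely from the magnitudes of the trained coefficients and the accumulation across sums.

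Next I would analyze a single layer in isolation. A layer node computes $y=\sum_{j=1}^{N_i} c_j x_j$, applies ReLU, and possibly enters a max-pooling selection. Assuming inductively that $|\epsilon_j|\le E$ for every incoming $x_j$, the dot-product error is $\sum_j c_j\epsilon_j$, which I would bound via Cauchy--Schwarz:
\[
\Bigl|\sum_{j=1}^{N_i} c_j\epsilon_j\Bigr|\;\le\;\sqrt{N_i}\,\Bigl(\sum_j c_j^2\Bigr)^{1/2}E\;\le\;r_i\,d_i\,E,
\]
identifying $r_i=\sqrt{N_i}$ with the square root of the convolution (or fully connected input) size and $d_i$ with the largest coefficient norm taken over the convolutions/output nodes of layer $i$. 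I would then argue that neither ReLU nor max-pooling enlarges this bound: ReLU multiplies by a selector bit $b\in\{0,1\}$, so its output error is either $0$ or coincides with the dot-product error, and max-pooling is likewise a selection among already-bounded candidates.

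Finally, induction over the list of layers $L$ closes the argument. Starting from $E_0=\Delta$ and using the recurrence $E_i\le r_i d_i\,E_{i-1}$ established above, unrolling yields $E_{|L|}\le \Delta\prod_{i\in L} r_i d_i$, which is exactly the stated bound.

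The step I expect to be the main obstacle is the selector bit inside ReLU and max-pooling. That bit is computed from the encrypted, error-bearing value, so if the true underlying $y$ lies within $r_id_iE$ of zero (for ReLU) or within $r_id_iE$ of the competing candidate (for max-pooling), the encrypted comparison may flip and the ``wrong'' branch may be chosen. The delicate part will be verifying that precisely in this near-tie regime the two candidate outputs differ by no more than $r_id_iE$, so that either choice preserves the worst-case bound and the induction goes through without incurring an extra multiplicative constant.
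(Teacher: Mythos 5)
Your proposal follows the same skeleton as the paper's argument: bound the error amplification of a single dot product, argue that ReLU and max pooling are selections that do not enlarge the error, and then induct over the layers in $L$ to obtain $\Delta\cdot\prod_{i\in L} r_i\, d_i$. Where you differ is in the dot-product step and in the level of rigor. The paper's Lemma~\ref{lem:dotproduct} invokes the propagation-of-uncertainty formula $\sqrt{\sum_i (c_i\Delta)^2}$, a root-sum-square heuristic for independent perturbations rather than a worst-case bound, and then loosely reads off $\Delta\cdot\sqrt{s}\cdot|c|$; your Cauchy--Schwarz estimate $\bigl|\sum_j c_j\epsilon_j\bigr|\le \sqrt{N_i}\,\|c\|_2\,E$ is a genuine deterministic bound of the same form, so your version actually justifies the inequality the paper only gestures at. More importantly, the obstacle you flag at the end --- that the selector bit in ReLU and in max pooling is computed from the \emph{perturbed} value and may therefore choose the wrong branch near a tie --- is entirely absent from the paper's Lemmas~\ref{lem:relu} and~\ref{lem:maxpool}, which simply assert that a selection introduces no error. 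Your worry can be discharged: if ReLU takes the wrong branch then the true value satisfies $0<y\le-\epsilon$, so the output error $|0-y|$ is still at most $E$; and if max pooling selects the wrong candidate, the returned value is sandwiched within $E$ of the true maximum, since the perturbed winner exceeds the perturbed true maximum while its own true value is smaller. Hence the induction closes with no extra multiplicative constant, and spelling this out would make your proof strictly stronger than the one in the paper.
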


Individual lemmas are built up from smaller pieces to larger parts of
the Convolutional Neural Network.  First we estimate the error bound
on the ReLU processing and follow that the max pooling processing.
This will be followed by the error bound on the dot product,
convolutional, and fully connected layers, respectively.


\begin{lemma}
  \label{lem:relu}
  Given an FHE based ReLU implementation, the computation will not
  increase the error of the input.
\end{lemma}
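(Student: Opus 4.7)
The plan is to show that neither of the two steps in the ReLU implementation, namely (i) computing the sign bit $b$ and (ii) forming the product $b \cdot x$, introduces any additional fixed-point error beyond what is already present in the input $x$.

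First, I would observe that the sign bit $b$ is obtained purely from the two's complement representation of $x$ (specifically, by inspecting the most significant bit, or equivalently by subtracting $x$ from $0$ and examining the sign bit of the result). No floor or rescaling operation in the sense of Eq.~\ref{eq:scale} is performed in this step, so $b$ is computed exactly and contributes no additional error. If $\Delta$ denotes the error on $x$, I would briefly argue that this sign extraction returns the correct Boolean value whenever $|x|$ exceeds $\Delta$; the edge case where $x$ is within $\Delta$ of $0$ does not matter for the error bound, since in that regime both $0$ and $x$ differ from the true ReLU value by at most $\Delta$.

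Next, I would analyze the multiplication $res = b \cdot x$. The general multiplication of two fixed-point operands with scaling factor $\delta$ produces a product with scaling $\delta^2$ and must be truncated back to $\delta$, which is the usual source of error growth. However, because $b \in \{0,1\}$ is a single bit (unscaled), the product $b \cdot x$ is either $x$ exactly or $0$ exactly in the fixed-point representation, and no truncation is required. Thus the output is either a value carrying the same error $\Delta$ as the input $x$, or the exact value $0$.

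Combining these two observations, the absolute error on the ReLU output is bounded above by the absolute error $\Delta$ of the input, so the operation does not increase the error. The mildly delicate point, and the one I would be most careful with, is the boundary case near $x = 0$: I would explicitly note that when the sign decision could be ``wrong'' due to the input error, both candidate outputs ($x$ and $0$) are within $\Delta$ of the true ReLU value, so the conclusion still holds.
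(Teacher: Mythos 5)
Your proposal is correct and follows essentially the same approach as the paper's proof: the output of the ReLU is either the input $x$ (carrying its existing error $\Delta$) or exactly $0$, so no additional error is introduced. Your treatment is more careful than the paper's one-line argument --- in particular, your explicit handling of the boundary case where $|x| \le \Delta$ and the sign bit could be ``wrong,'' and your observation that multiplying by the unscaled bit $b$ requires no truncation, fill in details the paper leaves implicit --- but the underlying reasoning is the same.
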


\begin{proof}
  As the ReLU computation only outputs the same value or zero, the
  only error is the one at the start of the process.  Thus the ReLU
  implementation does not introduce any error.
\end{proof}

\begin{lemma}
  \label{lem:maxpool}
  Given an FHE based max pooling implementation, the computation
  will not increase the error of the input.
\end{lemma}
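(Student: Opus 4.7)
The plan is to mirror the argument used for Lemma~\ref{lem:relu}: show that max pooling merely \emph{selects} one of its inputs rather than producing a new value derived from fresh fixed-point arithmetic, so no additional flooring or scale correction occurs. The update rule $max = b \cdot currentmax + !b \cdot next$ is the central object; I intend to argue that its output is bitwise identical to either $currentmax$ or $next$, and therefore the error of the output equals the error already present in whichever input was chosen.

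The first step is to confirm that the comparison bit $b$ is exact. Since $b$ is obtained from the sign bit of the encrypted two's-complement difference of two fixed-point numbers at the same scale $\delta$, and subtraction at a common scale is carried out by integer addition with no flooring, the resulting $b$ is a genuine single bit rather than an approximated quantity. The second step is to observe that each of the products $b \cdot currentmax$ and $!b \cdot next$ is therefore a conditional selection rather than a general multiplication: multiplying a fixed-point operand by a single bit reduces to AND-ing that bit with each bit of the operand and requires neither a Wallace tree nor a post-multiplication truncation that would otherwise introduce new floor error. Since $b$ and $!b$ are complementary, exactly one product is zero and their sum reproduces one of the two inputs exactly.

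Because max pooling over $n$ inputs is implemented as a loop, I would close with an induction on the number of elements absorbed so far: the invariant is that after $k$ iterations $currentmax$ equals one of the first $k$ inputs exactly, and thus carries at most the initial input error $\Delta$. The inductive step reduces precisely to the two-input analysis above, and the base case is immediate. The main obstacle I anticipate is the objection that the encrypted comparison might pick the ``wrong'' element when two inputs lie within a quantization distance of each other; however, since the lemma only claims that the representation error of the output does not grow, and not that the selected value matches the true real-valued maximum, whichever input is chosen still inherits no more than the common input error and the statement goes through.
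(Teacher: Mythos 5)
Your proposal is correct and rests on the same core observation as the paper's own (one-sentence) proof: max pooling merely selects one of its inputs, so the output carries only the error already present at the start. You supply considerably more detail than the paper does --- the exactness of the comparison bit, the fact that bit-masking is not a general fixed-point multiplication, the loop invariant, and the quantization-order subtlety --- but these are elaborations of the same argument rather than a different route.
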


\begin{proof}
  As max pooling takes in a set of values and outputs the max, the
  only error is the error at the start of the process as this will be
  continued forward.  Thus the max pooling implementation does not
  introduce any error.
\end{proof}


\begin{lemma}
  \label{lem:dotproduct}
  Given an FHE based dot product implementation, the computation will
  increase the error of the output value by:
  \begin{equation}
    \Delta \cdot \sqrt{s} \cdot |c|
  \end{equation}
  where $\Delta$ is the max introduced error from the inputs, $s$ is
  the size of the vector, and $|c|$ is the magnitude of the
  convolution constants.
\end{lemma}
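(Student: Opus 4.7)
\textbf{Proof plan for Lemma~\ref{lem:dotproduct}.}
The plan is to model each encoded input as the true real value plus a bounded representation error, push that error through the exact fixed-point arithmetic, and then bound the accumulated error by Cauchy--Schwarz to recover the $\sqrt{s}$ factor. Concretely, I would write $\tilde{x}_i = x_i + e_i$ where $|e_i|\le\Delta$ is the maximum error on any input (this is the $\Delta$ of the statement, coming ultimately from the floor in Eq.~\ref{eq:scale} and any prior layers feeding this dot product). Since the constants $c_i$ are used directly (as noted in the paragraph on the encrypted dot product, they do not need to be encrypted for client classification, so no extra error is attached to them), the fixed-point multiplications $c_i\cdot\tilde{x}_i$ and the subsequent additions are carried out exactly in the integer space via the \textsc{nand}-gate adders and Wallace trees, so no further truncation error is introduced at this step.

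Next, I would compute the absolute deviation of the implemented dot product from the ideal one:
\begin{equation}
\left|\sum_{i=0}^{s} c_i\tilde{x}_i - \sum_{i=0}^{s} c_i x_i\right| = \left|\sum_{i=0}^{s} c_i e_i\right|.
\end{equation}
Applying the Cauchy--Schwarz inequality gives
\begin{equation}
\left|\sum_{i=0}^{s} c_i e_i\right| \le \sqrt{\sum_{i=0}^{s} c_i^{2}}\cdot\sqrt{\sum_{i=0}^{s} e_i^{2}} \le |c|\cdot\sqrt{s}\cdot\Delta,
\end{equation}
where in the last step I use $|e_i|\le\Delta$ for each $i$ and identify $|c|=\sqrt{\sum c_i^2}$ with the magnitude of the constant vector as defined in the lemma statement. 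This yields exactly the claimed bound $\Delta\cdot\sqrt{s}\cdot|c|$.

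The main obstacle, and the point I would be most careful about, is justifying that the fixed-point multiplication and accumulation themselves contribute no additional error beyond what is already captured by $\Delta$. This requires noting two things: first, the scaling $\delta$ has already been absorbed into the definition of $\tilde{x}_i$ (and symmetrically into $c_i$ when the constants are also encoded), so the multiplications are exact integer products computed by Wallace trees on sufficiently wide bit strings; and second, the additions are exact full-adder computations that do not round. Once that is established, the proof collapses to the Cauchy--Schwarz step above. A secondary check is that the lemma is stated with $\Delta$ as the \emph{max} input error rather than a per-coordinate error, which is what allows the uniform bound $|e_i|\le\Delta$ to pull through; this matches how Lemmas~\ref{lem:relu} and~\ref{lem:maxpool} have already guaranteed that prior ReLU and max-pooling stages propagate $\Delta$ without amplification, so the inductive use of this lemma inside Theorem~\ref{thm:cnnerror} will be consistent.
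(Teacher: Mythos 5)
Your proof is correct and reaches the stated bound, but it takes a genuinely different route from the paper. The paper invokes the standard uncertainty-propagation (root-sum-of-squares) formula
\begin{equation}
\sqrt{\textstyle\sum_i \left(\frac{\partial F}{\partial x_i}\,\delta x_i\right)^2},
\end{equation}
substitutes $\partial F/\partial x_i = c_i$ and $\delta x_i = \Delta$ to get $\Delta\sqrt{\sum_i c_i^2}$, and then reads this off as $\Delta\cdot\sqrt{s}\cdot|c|$. That formula is a quadrature rule whose justification assumes independent, uncorrelated input errors; it is not by itself a worst-case guarantee, and the final step silently depends on whether $|c|$ denotes the Euclidean norm (in which case the extra $\sqrt{s}$ is redundant) or the maximum coordinate magnitude (in which case $\sqrt{\sum_i c_i^2}\le\sqrt{s}\,|c|$ is the needed inequality). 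Your argument instead writes the implemented inputs as $\tilde{x}_i = x_i + e_i$ with $|e_i|\le\Delta$ and bounds $\left|\sum_i c_i e_i\right|$ deterministically by Cauchy--Schwarz, which yields the same expression as a hard worst-case bound with no independence assumption --- this is the stronger and cleaner reading, and it composes more honestly with the layer-by-layer propagation in Theorem~\ref{thm:cnnerror}, where the $\Delta$ fed to each layer must be a bound rather than a standard deviation. You also explicitly surface the one assumption the paper leaves implicit: that the fixed-point multiply-accumulate itself is exact. Be aware that this is not fully dispatched by ``sufficiently wide bit strings'' --- after multiplying two fixed-point numbers the product carries $2f$ fractional bits and must be rescaled back to $f$, which contributes an additional truncation of order $2^{-f}$ per product that neither your argument nor the paper's accounts for; flagging it as you did is appropriate, but a complete proof would either absorb it into $\Delta$ or argue it is dominated by the input error.
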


\begin{proof}
  As the constant will not have any error, the focus is on the input
  parameters.  Using,
  \begin{equation}
    \sqrt{ \left(\frac{\partial F}{\partial x_1} \delta x_1 \right)^2 + \cdots + \left(\frac{\partial F}{\partial x_n} \delta x_n \right)^2 }
  \end{equation}
  to propagate the error, the individual partial derivatives will be the individual constants.
  The $\delta x_i$ values will be $\Delta$ as this is the input error.
  This reduces the equation to:
  \begin{equation}
    \sqrt{ \sum\left(c_i^2 \cdot \Delta^2 \right) }.
  \end{equation}
  Extracting $\Delta$ from the sum, the remainder is the magnitude of
  the constants multiplied by the size of the vector.  Thus the error
  is:
  \begin{equation}
    \Delta \cdot \sqrt{s} \cdot |c|.
  \end{equation}
\end{proof}

\begin{lemma}
  \label{lem:convolution}
  Given the FHE implementation of the convolution layer, the
 computation will increase the error by
  \begin{equation}
    \Delta \cdot \sqrt{s} \cdot |c|
  \end{equation}
  where $\Delta$ is the max input error per input value, $\sqrt{s}$ is
  the square of the convolution size, and $|c|$ is the magnitude of
  the convolution constants.
\end{lemma}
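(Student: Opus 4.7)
The plan is to treat the convolution layer as a pipeline of three sub-operations already analyzed: the dot-product that produces each feature-map entry, the ReLU applied at each node, and the max-pooling subsampling step. Since the error introduced at each stage composes sequentially, I would argue that the only stage in this pipeline which actually amplifies the input error is the dot-product, and then invoke the earlier lemmas to collapse the analysis.

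First I would set the stage: fix an arbitrary output value of the convolution layer and trace it back through the computation. Each output is obtained by (i) taking a length-$s$ window of encrypted inputs (each carrying error at most $\Delta$) and computing the encrypted inner product with the convolution constants $c_i$, (ii) passing the result through the ReLU activation, and (iii) being selected (or not) by a max-pool over a small block of sibling outputs. By Lemma~\ref{lem:dotproduct}, step (i) inflates the per-input error $\Delta$ to at most $\Delta \cdot \sqrt{s} \cdot |c|$, where $|c|$ is the magnitude of the constants for this particular convolution.

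Next I would apply Lemma~\ref{lem:relu} to step (ii) and Lemma~\ref{lem:maxpool} to step (iii): both lemmas state that their respective operations merely propagate whatever error is already present without amplification, because ReLU only zeroes out or copies its input, and max-pooling only selects one of its encrypted arguments. Composing these three bounds, the error at the output of the layer is exactly the one produced by the dot-product stage, namely $\Delta \cdot \sqrt{s} \cdot |c|$, matching the claim.

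The one subtle point — and the place I expect the mild obstacle to lie — is that a convolution layer actually contains several feature maps, each with its own constant vector, and Theorem~\ref{thm:cnnerror} asks for a single worst-case bound across the layer. I would handle this by interpreting $|c|$ in the lemma statement as the maximum over all convolutions in the layer of the magnitude of that convolution's constant vector (consistent with the ``maximum of the sum of constants'' phrasing in the theorem). With this reading, the bound $\Delta \cdot \sqrt{s} \cdot |c|$ uniformly dominates the error of every feature-map output, and the lemma follows immediately from Lemmas~\ref{lem:dotproduct}, \ref{lem:relu}, and~\ref{lem:maxpool}.
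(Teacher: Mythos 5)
Your proof follows essentially the same route as the paper's: decompose the layer into dot product, ReLU, and max pooling, apply Lemma~\ref{lem:dotproduct} to the first step and Lemmas~\ref{lem:relu} and~\ref{lem:maxpool} to show the latter two add nothing, and conclude. Your added remark about taking $|c|$ as the maximum over the layer's feature maps is a reasonable clarification the paper leaves implicit, but it does not change the argument.
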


\begin{proof}
  To prove this lemma, it is necessary to use
  Lemmas~\ref{lem:relu},~\ref{lem:maxpool}, and~\ref{lem:dotproduct}.
  The first step of the convolution layer is the dot product, which
  will introduce an error of $\Delta \cdot |c|$.  This was proved by
  Lemma~\ref{lem:dotproduct}.  The second step is the activation
  function post the dot product which is an ReLU, which as
  Lemma~\ref{lem:relu} proved introduces no additional error; so error
  is $\Delta \cdot \sqrt{s} \cdot |c|$.  The final step is the max
  pooling part.  Lemma~\ref{lem:maxpool} proved that no additional
  error is introduced.  Thus, the error introduced by a single
  convolution layer is $\Delta \cdot \sqrt{s} \cdot |c|$.
\end{proof}


\begin{lemma}
  \label{lem:fullyconnected}
  Given the FHE implementation of the fully connected layer, the
  implementation will increase the error by
  \begin{equation}
    \Delta \cdot \sqrt{s} \cdot |f|
  \end{equation}
  where $\Delta$ is the max input error per input value, $\sqrt{s}$ is
  the square of the inputs to the layer, and $|f|$ is the magnitude of
  the fully connected layer constants.
\end{lemma}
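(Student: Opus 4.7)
The plan is to reduce Lemma~\ref{lem:fullyconnected} almost entirely to Lemma~\ref{lem:dotproduct}, exactly in the same spirit as the proof of Lemma~\ref{lem:convolution}. A fully connected layer is simply a collection of output nodes, each of which computes a dot product between the vector of inputs and a vector of trained constants, with possibly an activation applied afterward. So the error introduced by the layer is the maximum error produced by any of its output nodes.

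First I would write out the fully connected layer as a list of output computations, one per output node. For each output node, Lemma~\ref{lem:dotproduct} gives an error bound of $\Delta \cdot \sqrt{s} \cdot |c|$, where $s$ is the number of inputs to the layer (identical for every node because the layer is fully connected) and $|c|$ is the magnitude of the constants for that particular node. Since the lemma's statement defines $|f|$ as the maximum of the sum of constants among the outputs of the layer, taking the max over nodes yields the bound $\Delta \cdot \sqrt{s} \cdot |f|$.

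Second, if an activation is applied at the output (as in the rest of the network), I would invoke Lemma~\ref{lem:relu} to note that the rectified linear unit either passes the value through or zeroes it out, and therefore cannot amplify the error beyond what the dot product produced. No max pooling occurs at the fully connected layer, so Lemma~\ref{lem:maxpool} is not strictly needed, though it would similarly contribute no additional error if present.

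The main obstacle is almost purely bookkeeping rather than mathematical: I would need to confirm that the quantity $|f|$ in the lemma statement agrees with the per-node $|c|$ used in Lemma~\ref{lem:dotproduct} after taking the maximum over output nodes, and that $s$ is consistently interpreted as the input dimension to the layer. Once these identifications are made explicit, the bound follows immediately by composing Lemmas~\ref{lem:dotproduct} and~\ref{lem:relu}, giving a total error of $\Delta \cdot \sqrt{s} \cdot |f|$ as claimed.
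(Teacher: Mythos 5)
Your proposal is correct and follows essentially the same route as the paper: apply Lemma~\ref{lem:dotproduct} with $|f|$ substituted for $|c|$ (interpreted as the maximum over output nodes), then observe that the activation contributes no further error. The only cosmetic difference is that you allow for a ReLU activation and cite Lemma~\ref{lem:relu}, whereas the paper treats the fully connected layer's activation as the identity; both yield the same conclusion.
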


\begin{proof}
  In this context, the fully connected layer is a dot product with an
  activation function that is linear.  With $\Delta$ as the input
  error, Lemma~\ref{lem:dotproduct} proves this error will be:
  \begin{equation}
    \Delta \cdot \sqrt{s} \cdot |f|    
  \end{equation}
  where $|f|$ has been substituted for $|c|$.  As a linear function is
  simply an identity function which will not introduce any error
  beyond the error already included.
\end{proof}


Theorem~\ref{thm:cnnerror} can be proved using the results of
Lemma~\ref{lem:convolution} and Lemma~\ref{lem:fullyconnected}.

\begin{proof}
  Given the structure of the Convolutional Neural Network as $L$, each
  layer is input to the next layer.  This means the error will
  propagate over time from layer to layer.
  Lemmas~\ref{lem:convolution} and~\ref{lem:fullyconnected} provide
  the error contributed by a single layer for convolutions and fully
  connected respectively.  As the error is propagated from layer to
  layer, the maximum error is what should be retained to the next
  layer.  For the convolutional layer (Lemma~\ref{lem:convolution}),
  this will be the value $|c|$.  For the fully connected layer
  (Lemma~\ref{lem:fullyconnected}), this will be the value $|f|$.  The
  initial error, $\Delta$, can be pulled out at each layer.  By
  designating the individual $|c|$ and $|f|$ as $d_i$ for the
  individual layers and $\sqrt{s}$ as $r_i$ per layer, the error is
  increased from the original error to the final error by the product
  of each individual layer.  This equates to the following equation:
  \begin{equation}
    \Delta \cdot \prod_{i \in L} r_i \cdot d_i 
  \end{equation}
\end{proof}


Consider the CNN in Figure~\ref{fig:specificcnn}, the following
corollary shows the specific situation for that structure as an example of applying Theorem~\ref{thm:cnnerror} to a specific situation:

\begin{corollary}
  \label{cor:current}
  For a CNN with 2 convolutional layers and a fully connected layer,
  the total error will be:
  \begin{equation}
    \Delta \cdot 25 \cdot \sqrt{240} \cdot d_1 \cdot d_2 \cdot d_3
  \end{equation}
  where the $r_i$ values have been filled in.  $d_i$ values are
  dependent on the specific training.
\end{corollary}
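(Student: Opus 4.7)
The plan is to apply Theorem~\ref{thm:cnnerror} directly by reading off the per-layer parameters $r_i$ from the architecture depicted in Figure~\ref{fig:specificcnn} and then collecting the product. Since the $d_i$ depend on the trained weights, they are left symbolic; only the geometric factors $r_i$ need to be computed.

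I would first enumerate the layers: $L = \{\text{conv}_1, \text{conv}_2, \text{fc}\}$. For each convolutional layer the convolution window is $5\times 5$, so the convolution size is $25$ and by the theorem $r_1 = r_2 = \sqrt{25} = 5$. For the fully connected layer, the node input size is $15 \cdot 4 \cdot 4 = 240$, giving $r_3 = \sqrt{240}$. The constants $d_1, d_2, d_3$ stay as the (trained) magnitudes specified in the theorem statement for each of the three layers.

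Next, I would substitute into the product formula from Theorem~\ref{thm:cnnerror}:
\begin{equation}
\Delta \cdot \prod_{i \in L} r_i \cdot d_i
= \Delta \cdot (5 \cdot d_1)(5 \cdot d_2)(\sqrt{240} \cdot d_3)
= \Delta \cdot 25 \cdot \sqrt{240} \cdot d_1 \cdot d_2 \cdot d_3,
\end{equation}
which is exactly the claimed bound. I would also briefly note that Lemmas~\ref{lem:relu} and~\ref{lem:maxpool} ensure the ReLU and max pooling steps inside each convolution layer contribute no additional multiplicative factor, so the geometric contribution of each layer is entirely captured by $r_i$.

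There is no real obstacle here: the corollary is a pure specialization of Theorem~\ref{thm:cnnerror}, and the only arithmetic is recognizing that two $5\times 5$ convolutions give $\sqrt{25}\cdot\sqrt{25} = 25$ and that the fully connected fan-in is $240$. The main care needed is to be explicit that the $d_i$ are not computable without the specific trained weights, which matches the caveat already in the corollary statement.
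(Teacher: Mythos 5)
Your proposal is correct and is exactly the intended argument: the corollary is a direct specialization of Theorem~\ref{thm:cnnerror}, with $r_1=r_2=\sqrt{25}=5$ from the two $5\times5$ convolutions and $r_3=\sqrt{240}$ from the fully connected fan-in, yielding $\Delta\cdot 25\cdot\sqrt{240}\cdot d_1\cdot d_2\cdot d_3$. The paper gives no separate proof for this corollary, and your substitution matches what its statement presupposes.
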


\section{Experimental Results}
\label{sec:experimental}

To verify Theorem~\ref{thm:cnnerror}, an implementation of the CNN is
necessary.  The code was developed in C++ using a home grown FHE
implementation supported by FFLAS-FFPACK (\citet{fflasffpack}) and
HDF5 (\citet{hdf5}).  Input values are real numbers between -1 and 1.
Fixed point binary format is configured to be 32 total bits with 16
bits being the fractional space; requiring a fixed point scale of
65536.  The actual CNN used is based on the original CNN paper
\cite{lecun1995convolutional}.  Input images are handwritten numerical
digits from 0 to 9.  The same data set as
\cite{lecun1995convolutional} was used.  Training was done offline
from the FHE implementation; training was not part of the FHE
implementation.  Tiny CNN (\cite{tinycnn}) was used to train the
network using the original training set.  Individual images from the
test set were run against the implementation.  The ten outputs between
unencrypted and encrypted implementation are compared to verify the
encrypted Convolution Neural Network performs as well as the
unencrypted implementation.  These results are discussed in the next
paragraph.

Using a set 20 images from the original test set, each image was
encrypted, run through the Convolutional Neural Network implementation
and the classifications decrypted for analysis.  All encrypted results
matched the unencrypted recommended classifications exactly.
Numerical errors where contained within the $10^{-3}$ region.  On
average, the individual error was 0.00085 with a standard deviation of
0.00022.  These results are aligned with Theorem~\ref{thm:cnnerror}
and Corollary~\ref{cor:current}; clearly showing that the
Convolutional Neural Network implementation successfully works in the
encrypted domain.


Time and space complexity are an important consideration in FHE.
FHE's basic time and space complexity have been considered
inefficient.  A single FHE ciphertext is a matrix $O(\tilde{n}^2)$;
this is a single bit in the fixed point binary format.  $\tilde{n}^2$
is the size of the FHE ciphertext matrix and worse case operations are
$O( \tilde{n}^3 )$.  All algorithm complexity is multiplied by this.

From a space complexity perspective, there will be interim space
needed during the computations and output space for the result.
Starting with the interim space, a convolution layer has $l$ feature
maps with $m^2$ as the individual feature map size.  Combining, this
is bounded by $O(l \cdot m^2 \cdot f \cdot \tilde{n}^2)$ where $f$ is
used for the number of fixed point binary digits.  As this is for a
single layer, each layer will have different parameters.  Considering
$l$ and $m$ as the max values from all the layers, the asymptotic
bound for the interim space complexity can include the number of
layers, $n_l$, thus $O( n_l \cdot l \cdot m^2 \cdot f \cdot
\tilde{n}^2)$.  Looking at the output space complexity, the fully
connected layer will output a set number of classifications,
designated by $c$.  As these are individual ciphertexts of the format,
the asymptotic bound will be the combination: $O(c \cdot f \cdot
\tilde{n}^2)$.

We note that the \textsc{nand} gate that is the basis of the FHE
operations is $O(\tilde{n}^3)$.  For a single convolutional layer,
there are $l$ feature maps of size $m^2$ that each have a convolution
of a $5 \times 5$ size.  The convolution is effectively 25
multiplications and additions (designated $25am$).  So far, this time
complexity is $O( l \cdot m^2 \cdot (25am))$.  Next, the $25am$ needs
to be converted into individual \textsc{nand} gates.  Multiplications
in the fixed point format start with a $4f$ operation (\textsc{and}
gates of all digits) and then a $\log f$ operation of $2f$ full
adders.  Full adders need a total of $36$ \textsc{nand} gates.
Combining these terms becomes $O( f^2 \log{f} )$.  Addition takes $f$
full adders; thus $36f$ \textsc{nand} gates.  Thus a single layer's
asymptotic time complexity is $O( l \cdot m^2 \cdot f^2 \log{f} \cdot
\tilde{n}^3)$.  A fully connected layer is similar by changing to the
number of classifications and the proper inputs.  Becoming O$( c \cdot
m^2 \cdot f^2 \log{f} \cdot \tilde{n}^3)$.  Considering a finite
number of convolution layers, $n_l$, and fully connected layers,
$n_c$, the asymptotic time complexity is $O( n_l \cdot l \cdot m^2
\cdot f^2 \log{f} \cdot \tilde{n}^3 + n_c \cdot c \cdot m^2 \cdot f^2
\log{f} \cdot \tilde{n}^3)$.  This asymptotic time complexity is worst
case because after the first layer the number $m^2$ drops.

It is worth noting that parallel processing techniques are available
to improve actual running time.  The current implementation used a
GPU to increase efficiency of the matrix-matrix multiplication.
Additionally, the individual convolutions in a layer do not depend on
each other, thus these can be run in parallel.  For concrete numbers,
the implementation was able to compute the classifications in two days
using a four CPU desktop computer with two GPUs and total space of
about one GB.  As the desktop computer used is not a real cloud
server, greater efficiency could be gained by using more CPUs and
GPUs to bring the two day computation down to a user tolerable time.

\section{Conclusion}
\label{sec:conclusion}


As has been shown in the preceding sections, Convolutional Neural
Network classifications can be computed over FHE.  This enables a user
to offload their computations to a cloud computer or for a
client/server model of protecting privacy of the individual data.  An
open problem to be considered is potentially using the encrypted
classification for other algorithms post classification.  An argmax
function is easy to implement in FHE given the work done in
Section~\ref{sec:cnninfhe} (basically an extension of the max
function).  Argmax allows for selecting the classification for other
uses.



\bibliographystyle{plainnat}
\bibliography{mlconfpaper}

\end{document}